\documentclass[12pt,a4]{article}
\usepackage[T1]{fontenc}
\usepackage[margin=2.5cm]{geometry}
\usepackage{microtype}

\usepackage{amsmath,amssymb,amsthm,mathtools}
\usepackage{setspace,enumitem,soul,url,xcolor,hyperref,bm,pifont,cancel,comment,booktabs}
\usepackage{subcaption}
\usepackage{tikz-cd}

\newtheorem{lemma}{Lemma}
\newtheorem{coro}{Corollary}

\usepackage{authblk}
\author[1,2]{Li-Chun Zhang}
\author[3]{Tiziana Tuoto}
\affil[1]{\em \small University of Southampton, UK}
\affil[2]{\em \small Statistisk sentralbyrå, Norway}
\affil[3]{\em \small Istituto nazionale di statística, Italy}
\title{Conformal confidence intervals with an application to small area estimation}
\date{}

\begin{document}

\maketitle

\begin{abstract} Conformal prediction inference yields intervals for out-of-sample random outcomes with designated coverage probabilities, given exchangeable or independent-and-identically distributed random variables. For regression analysis, valid coverage can be achieved given finite sample sizes, despite unavoidable misspecifications of the regression function. We propose a novel method of conformal inference, aimed to produce confidence intervals of the unknown expectations of the in-sample outcomes with the designated coverage probabilities conditional on the realised sample. These conformal confidence intervals fill a gap between classical regression and conformal inference. The proposed approach is applied to small area estimation problems. 
\end{abstract}

\noindent
\textbf{Key words:} conformal shrinkage interval, miscoverage probability, split-sampling design, exchangeable distribution by design, conditional confidence inference

\section{Introduction} \label{sec:introduction}

Regression is a cornerstone of statistical modelling to study or predict the outcome (or dependent) variable in relation to the available features (or covariates, explanatory variables), where the expectation of the outcome is formulated as a \emph{regression function} of the features. In the finite-sample setting, classical regression inference relies heavily on correct model specifications. For instance, for the pivotal student-$t$ distribution to provide valid predictive inference, one needs to have the correctly specified linear regression function, and the regression errors must be independent, homoscedastic, and normally distributed conditional on the features. Conformal methods dating back to Vovk et al. (2005) represent an alternative that has received attention in the recent years, because they can provide valid predictive inference even without correctly specifying the regression function or the error distribution. In particular, Lei et al. (2018) popularised the computationally effective \emph{split-conformal} prediction intervals and established the conditions for their asymptotic efficiency. 

Specifically, suppose regression data are observed for a sample of units $s = \{ 1, ..., m \}$, comprised of continuous outcomes $y_i \in \mathbb{R}$ and vectors of features $x_i\in \mathbb{R}^p$ from the joint distribution $\mathcal{P}$, denoted by
\[
\{ (y_i, x_i) : i\in s\} \sim \mathcal{P} ~.
\]
For an \emph{out-of-sample} outcome $y_{m+1}$ with the associated $x_{m+1}$, one only needs $(y_i, x_i)$ to be independent and identically distributed (IID), $i=1, ..., m+1$, for the conformal prediction interval to achieve the designated coverage probability of $y_{m+1}$ in the finite-sample setting, i.e. over the $m+1$ IID draws, even though the adopted regression function for $\mathbb{E}(y\mid x)$, denoted by $\mu(x)$, is always misspecified to a greater or lesser extent in practice. 

We consider a different regression inference problem that is often of interest. Let 
\[
\theta_i = \theta(x_i)
\]
be the unknown expectation of $y_i$ given $x_i$ with respect to $\mathcal{P}$. For each \emph{in-sample} unit, $i\in s$, we propose to construct $\mathcal{C}_i$ as an interval estimator of $\theta_i$, by some suitable conformal methods, which is aimed to achieve the chosen \emph{marginal confidence coverage} level $\alpha$ over the finite sample $s$, given as
\begin{equation} \label{cvr_conf}
\frac{1}{m} \sum_{i=1}^m \Pr\big( \theta_i \in \mathcal{C}_i \big) = \alpha ~.
\end{equation}

As far as we are aware of, such in-sample \emph{conformal confidence intervals} will fill a gap between classical regression and conformal inference, which can have wide-ranging applications wherever regression analysis is of interest.

\subsection{Small area estimation as a motivating application}

Denote by $\mathcal{D} =\{ 1, ..., m\}$ the areas (or domains), and let $Y_i$ be the area population parameter of interest for each $i\in \mathcal{D}$. Let $y_i$ be a \emph{direct} estimator of $Y_i$, which is based on the within-area sample that is of the size $n_i$. In small area estimation (SAE),  it is common to assume that each $y_i$ is unbiased of $Y_i$ over repeated sampling from the given population; see e.g. Fay and Herriot (1979), Rao and Molina (2015). Let the unknown $Y_i$'s be the parameters $\theta_i$'s in the general formulation above, for which we would like to obtain conference intervals with the coverage level \eqref{cvr_conf}.

The SAE challenge arises due to the limited area sample sizes $n_i$, such that the direct estimator $y_i$ has too large a variance to be acceptable in many (or most) areas. It is then common to adopt model-based methods in order to reduce the variance of estimation. For example, the FH-model is given by
\begin{equation} \label{FHmod}
y_i = x_i^{\top} \beta + v_i + e_i
\end{equation}
(Fay and Herriot, 1979), where $x_i$ is the vector of features and $\beta$ that of the regression coefficients, $v_i$ is the random effect and $e_i$ the sampling error of $y_i$. Both $v_i$ and $e_i$ have expectation zero, and they are assumed to be independent of each other. 

Under \eqref{FHmod}, one can obtain the empirical best linear unbiased predictor (EBLUP) of $\theta_i$, as well as an interval estimator of $\theta_i$ based on the estimated mean squared error (MSE) of the EBLUP. However, these EBLUP intervals may fail to achieve the nominal level of coverage due to several reasons. 
\begin{itemize}[leftmargin=6mm,itemsep=0pt] 
\item The adopted regression function $\mu(x_i) = x_i^{\top} \beta$ is misspecified.
\item The sampling variances, denoted by $\psi_i = \mathbb{V}(e_i)$, need to be estimated. 
\item The random effects $v_i$ may be heteroscedastic, i.e. $\mathbb{V}(v_i)$ is not a constant $\sigma_v^2$.
\end{itemize}
Notice that there are many other models used in SAE, such as the unit-level nested error regression model (Battese et al. 1988), where similar challenges exist for the corresponding model-based interval estimation. 

In any case, we shall let $\mu(x_i)$ denote generically any regression function targeted at the area-specific parameter $\theta_i$, such that the conformal methods we consider will be agnostic to the underlying model specifications.

\subsection{Conformal confidence intervals} 

Instead of interval estimators under models like \eqref{FHmod}, we shall consider conformal methods aimed at valid coverage \eqref{cvr_conf}, regardless the misspecifications of the adopted regression function $\mu(x_i)$ or the estimated variance of $y_i$, while it is still possible to utilise the relevant features in terms of $\mu(x_i)$ in order to reduce the interval widths. 

However, the existing conformal inference methods are concerned with \emph{prediction intervals} of out-of-sample random variables; see e.g. Vovk et al. (2005), Shafer and Vovk (2008), Vovk et al. (2022), Lei et al. (2018), Tibshirani et al. (2019). In the context of SAE, Bersson and Hoff (2024) consider conformal prediction inference of the unobserved population units from this perspective, where the target is the out-of-sample $y$-values rather than the area population parameters $Y_i$.

We therefore need to overcome two obstacles. First, we would like to establish the confidence coverage property \eqref{cvr_conf}, where each $\theta_i$ is a constant of sampling, i.e. we need conformal confidence intervals instead of prediction intervals. 

Second, insofar as the IID assumption of $(y_i, x_i)$, where $i\in s$, may be unnecessary or inappropriate, such as in the case of finite population sampling, we would like to achieve the confidence coverage \eqref{cvr_conf} conditional on the actual sample $s$, without resorting to the assumption of IID $(y_i, x_i)$ or exchangeable joint distribution of $\{ (y_i, x_i) : i \in s \}$.

\subsection{Outline of development} 

We shall develop conformal confidence intervals in two steps. Firstly, we introduce a method of conformal intervals for confidence inference, instead of prediction inference, under the assumption of IID $(y_i, x_i)$. Next, to apply the new conformal method, we propose a \emph{split-sampling} design conditional on $s$, such that the probability in \eqref{cvr_conf} can be evaluated with respect to the known split-sampling design, without the assumption of IID or exchangeability for the underlying joint distribution of $\{ (y_i, x_i) : i \in s\}$. 

These \emph{design-based conformal confidence intervals} will be developed in Section \ref{sec:CCI}, and applied to SAE in Section \ref{sec:application}. Some final remarks will be given in Section \ref{sec:final}.

\section{Design-based conformal confidence intervals} \label{sec:CCI}

\subsection{Conformal shrinkage interval} \label{sec:CSI}

Let us start with some simple probability statements. Given any IID sample $z_1, ..., z_n, z_j$, $j\notin \{ 1, ..., n\}$, where the set of units $\{ 1, ..., n\}$ will be referred to as the \emph{calibration set} for the \emph{holdout} unit $j$, we have
\[
\Pr\big( \min(z_1, ..., z_n) \leq z_j \leq \max(z_1, ..., z_n) \big) = \frac{n-1}{n+1} ~.
\]
Notice that the conformal interval above is asymmetric generally, in contrast to the more commonly used symmetric conformal interval 
\[
\Pr\big( -\max(|z_1|, ..., |z_n|) \leq z_j \leq \max(|z_1|, ..., |z_n|) \big) = \frac{n}{n+1} ~.
\]
We consider both types of conformal intervals since, as we shall see, their confidence coverage and relative efficiency depend on different conditions. 

Next, given IID $(y_i, x_i)$, and any known (or pre-trained) regression function $\mu_i =\mu(x_i)$, we have, by virtue of IID $z_i = y_i - \mu_i$,  
\begin{gather*}
\Pr\big( \mu_j + \min_{i=1,...,n} (y_i-\mu_i) \leq y_j \leq \mu_j + \max_{i=1,...,n} (y_i-\mu_i) \big) = \frac{n-1}{n+1}~, \\
\Pr\big( \mu_j - \max_{i=1,...,n} |y_i-\mu_i| \leq y_j \leq \mu_j + \max_{i=1,...,n} |y_i-\mu_i| \big) = \frac{n}{n+1}~.
\end{gather*}
In particular, we shall refer to 
\[
u_i = y_i - \mu_i
\] 
as the \emph{(conformal) score}, to be distinguished to the \emph{absolute scores} $|u_i| = |y_i - \mu_i|$. 

Now, for confidence inference, we propose to consider a new conformal method. Let $\phi$ be a shrinkage constant, $0< \phi <1$. Let the \emph{shrinkage score} be
\[
b_i(\phi) = y_i - \dot{\theta}_i(\phi) = (1-\phi) (y_i - \mu_i) \qquad\text{where}\qquad \dot{\theta}_i(\phi) = \phi y_i + (1-\phi) \mu_i ~.
\]
Let the \emph{asymmetric conformal shrinkage interval (ACSI)} for the holdout unit be 
\begin{equation} \label{eq:acsi}
\mathcal{A}_j(k; \phi) = [\dot{\theta}_j(\phi) + b_{(k)}(\phi),~ \dot{\theta}_j(\phi) + b_{(n -k +1)}(\phi)] 
\end{equation}
where $b_{(1)}, ..., b_{(n)}$ are the order statistics of $\{ b_i : i=1,..., n\}$ in the calibration set, and $k=1, ..., \lfloor \tfrac{n}{2}\rfloor$. Moreover, let the \emph{conformal shrinkage interval (CSI)} be
\begin{equation} \label{eq:csi}
\mathcal{C}_j(k; \phi) = \big[ \dot{\theta}_j(\phi) - |b|_{(n-k+1)}(\phi),~ \dot{\theta}_j(\phi) + |b|_{(n -k +1)}(\phi) \big] 
\end{equation}
where $|b|_{(1)}, ..., |b|_{(n)}$ are the order statistics of $\{ |b_i| : i =1, ..., n\}$, and $k=1, ..., n$.

\begin{lemma} \label{lemma:IID:pred}
Given any $\phi \in (0,1)$, and $\mu(x)$ that does not depend on the IID $(y_i, x_i)$, the prediction coverage of $y_j$ is $\alpha = 1- \frac{2k}{n+1}$ by $\mathcal{A}_j(k; \phi)$ and $\alpha = 1- \frac{k}{n+1}$ by $\mathcal{C}_j(k; \phi)$. 
\end{lemma}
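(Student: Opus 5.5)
The plan is to reduce both statements to the standard exchangeability (rank) argument for IID scores. Since $\mu(x)$ does not depend on the data, the pairs $(y_i,x_i)$, $i\in\{1,\dots,n,j\}$, being IID implies that the shrinkage scores $b_i(\phi)=(1-\phi)(y_i-\mu_i)$ are IID as well. This includes the holdout score $b_j(\phi)=y_j-\dot\theta_j(\phi)$. The key algebraic observation is that the event $y_j\in\mathcal{A}_j(k;\phi)$ is exactly
\[
b_{(k)}(\phi)\le b_j(\phi)\le b_{(n-k+1)}(\phi),
\]
because $y_j-\dot\theta_j(\phi)=b_j(\phi)$. Similarly, $y_j\in\mathcal{C}_j(k;\phi)$ is exactly $|b_j(\phi)|\le |b|_{(n-k+1)}(\phi)$. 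The shrinkage constant $\phi$ therefore plays no role beyond a common positive rescaling by $1-\phi$, and the statement is a statement about ranks of $n+1$ IID variables.

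Next I compute the rank probabilities. I will assume the scores are continuous, so that ties occur with probability zero; this is the setting implicit in the paper's exact equalities, and I would state it explicitly. Under that assumption, the rank $R$ of $b_j$ among $\{b_1,\dots,b_n,b_j\}$ is uniform on $\{1,\dots,n+1\}$.

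For the ACSI, $b_{(k)}\le b_j$ holds iff at least $k$ calibration scores lie below $b_j$, i.e. $R\ge k+1$. Likewise $b_j\le b_{(n-k+1)}$ holds iff at most $n-k$ calibration scores lie below $b_j$, i.e. $R\le n-k+1$. The coverage is therefore
\[
\frac{(n-k+1)-(k+1)+1}{n+1}=\frac{n-2k+1}{n+1}.
\]

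For the CSI, apply the same argument to the IID absolute scores $|b_i|$. The event is $\mathrm{rank}(|b_j|)\le n-k+1$, which has probability $(n-k+1)/(n+1)$.

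These counts must be reconciled with the displayed formulas $1-\frac{2k}{n+1}$ and $1-\frac{k}{n+1}$, and I expect this to be the main obstacle. A direct count gives $1-\frac{2k-1}{n+1}$ and $1-\frac{k-1}{n+1}$. This matches the paper's earlier min/max examples: $k=1$ gives $\frac{n-1}{n+1}$ and $\frac{n}{n+1}$. So the stated formula presumably intends either a shifted indexing of $k$, or coverage of the open interval (or, equivalently, a conservative ``at least'' reading). I would present the exact rank computation, note that $k=1$ reproduces the earlier displayed probabilities, and state the lemma's $\alpha$ under the indexing convention that makes it consistent, flagging the off-by-one. For the equality to hold exactly rather than as a bound, the continuity and no-ties assumption is also needed; without it the coverage is at least the stated value.
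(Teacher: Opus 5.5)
Your reduction is the paper's own: since $y_j-\dot\theta_j(\phi)=b_j(\phi)$ and the scores $b_i(\phi)=(1-\phi)u_i$ are IID when $\mu$ is fixed, coverage of $y_j$ by $\mathcal{A}_j(k;\phi)$ or $\mathcal{C}_j(k;\phi)$ becomes a rank event for the holdout score among $n+1$ IID values. The paper simply asserts the resulting probabilities. You compute them through the uniform rank $R$, and your counts are correct: $R\in\{k+1,\dots,n-k+1\}$ for the ACSI and $\mathrm{rank}(|b_j|)\le n-k+1$ for the CSI. Stating the no-ties (continuity) assumption explicitly, with ``at least'' when ties are possible, is a fair addition; the paper leaves it implicit.

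Your last paragraph is wrong, however, because of an arithmetic slip. In fact $\frac{n-2k+1}{n+1}=\frac{(n+1)-2k}{n+1}=1-\frac{2k}{n+1}$ and $\frac{n-k+1}{n+1}=1-\frac{k}{n+1}$, which are exactly the lemma's formulas. By contrast, $1-\frac{2k-1}{n+1}=\frac{n-2k+2}{n+1}$ and $1-\frac{k-1}{n+1}=\frac{n-k+2}{n+1}$ do not equal your counts. Your own $k=1$ check confirms the lemma as stated. Its formulas give $\frac{n-1}{n+1}$ and $\frac{n}{n+1}$, matching the min/max examples, whereas your ``direct count'' versions would give $\frac{n}{n+1}$ and $1$. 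So there is no off-by-one, and no re-indexing of $k$, open interval, or conservative reading is needed. If you restated the lemma under a shifted $k$ as planned, you would turn a correct statement into a false one. Drop that paragraph and conclude directly from your rank counts.
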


The proof is given in Appendix \ref{sec:proof}, as are all the other proofs later. The reason is straightforward now that the shrinkage scores are IID, just like  the `full' absolute scores $|u_i| = |y_i -\mu_i|$ used in the standard conformal methods.

\begin{lemma} \label{lemma:IID:ACSI}
Given $\mu(x)$ that does not depend on the IID $(y_i, x_i)$, the confidence coverage of $\theta_j$ by the ACSI $\mathcal{A}_j(k; 0.5)$ with $\phi =0.5$ is $\alpha = 1- \frac{2k}{n+1}$ exactly, provided $e_i = y_i - \theta_i$ has a symmetric distribution, otherwise the coverage differs from $\alpha = 1- \frac{2k}{n+1}$ by
\begin{equation} \label{mis:ACSI}
\epsilon_{\mathcal{A}} = \Pr\big\{ u_{(k)} \leq 2(\theta_j - \mu_j) - u_j \leq u_{(n-k+1)} \big\} - \Pr\big\{ u_{(k)} \leq u_j \leq u_{(n-k+1)} \big\} ~.
\end{equation} 
\end{lemma}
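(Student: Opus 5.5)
With $\phi=0.5$ the shrinkage scores are $b_i = u_i/2$ and $\dot\theta_j = \mu_j + u_j/2$. Since $x\mapsto x/2$ is monotone, $b_{(k)} = u_{(k)}/2$ for all $k$. I will first rewrite the event $\theta_j\in\mathcal{A}_j(k;0.5)$ entirely in terms of the conformal scores:
\[
\mu_j + \tfrac12 u_j + \tfrac12 u_{(k)} \le \theta_j \le \mu_j + \tfrac12 u_j + \tfrac12 u_{(n-k+1)} .
\]
Subtracting $\mu_j + u_j/2$, multiplying by 2 and rearranging, this is equivalent to
\[
u_{(k)} \le 2(\theta_j-\mu_j) - u_j \le u_{(n-k+1)} .
\]
Hence the confidence coverage equals the first probability in \eqref{mis:ACSI}.

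Next I handle the prediction coverage. By Lemma \ref{lemma:IID:pred}, or directly from exchangeability of the IID scores $u_1,\dots,u_n,u_j$, we have $\Pr\{u_{(k)}\le u_j\le u_{(n-k+1)}\} = 1-\tfrac{2k}{n+1}$. Subtracting gives the stated discrepancy $\epsilon_{\mathcal{A}}$ with no further assumption, which proves the second part.

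For exactness under symmetry, I will define $\tilde u_j = 2(\theta_j-\mu_j)-u_j$. Writing $u_j = (\theta_j-\mu_j) + e_j$, this becomes $\tilde u_j = (\theta_j-\mu_j) - e_j$. The key claim is that $(x_j, \tilde u_j)$ has the same joint law as $(x_j,u_j)$, and that it is independent of the calibration scores.

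\begin{itemize}
\item Independence is immediate, since $\tilde u_j$ is a function of $(y_j,x_j)$ alone, given that $\mu$ is fixed.
\item For equality in law, I condition on $x_j$. Then $\theta_j-\mu_j$ is a constant, and $-e_j$ has the same conditional law as $e_j$ by the symmetry assumption. Here symmetry must be read as symmetry of $e_j$ given $x_j$, which is the natural interpretation; I would state this explicitly.
\end{itemize}

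Consequently $(u_1,\dots,u_n,\tilde u_j)$ has the same joint distribution as $(u_1,\dots,u_n,u_j)$. Both probabilities in \eqref{mis:ACSI} are therefore equal, so $\epsilon_{\mathcal{A}}=0$ and the coverage is exactly $1-\tfrac{2k}{n+1}$.

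The main subtlety is the precise form of the symmetry assumption. Marginal symmetry of $e_j$ is not enough when $\theta_j-\mu_j$ is correlated with $e_j$, so the argument needs conditional symmetry given $x_j$. Ties are handled by assuming continuous distributions, as in Lemma \ref{lemma:IID:pred}.
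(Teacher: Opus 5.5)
Your proposal is correct and follows the paper's proof: you rewrite $\theta_j\in\mathcal{A}_j(k;0.5)$ as $u_{(k)}\le 2(\theta_j-\mu_j)-u_j\le u_{(n-k+1)}$, identify the prediction event as having probability $1-2k/(n+1)$, subtract to get $\epsilon_{\mathcal{A}}$, and use symmetry of $e_j$ to make the two probabilities equal. Your explicit equal-law argument for $(u_1,\dots,u_n,\tilde u_j)$ and $(u_1,\dots,u_n,u_j)$, together with your point that symmetry must hold conditionally on $x_j$ because $\theta_j-\mu_j$ varies with $x_j$, is a valid sharpening of a step that the paper's one-line conclusion leaves implicit.
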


\begin{lemma} \label{lemma:IID:CSI}
Given $\mu(x)$ that does not depend on the IID $(y_i, x_i)$, the confidence coverage of $\theta_j$ by the CSI $\mathcal{C}_j(k; 0.5)$ with $\phi =0.5$ is $\alpha = 1- \frac{k}{n+1}$ exactly (i) if $\mu(x_j) = \theta_j$ or (ii) if $e_i = y_i - \theta_i$ has a symmetric distribution, otherwise the coverage differs to $\alpha = 1- \frac{k}{n+1}$ by
\begin{equation} \label{mis:CSI}
\begin{split}
\epsilon_{\mathcal{C}} & = \Pr\big\{ -|u|_{(n-k+1)} \leq u_j - 2 (\theta_j - \mu_j) \leq |u|_{(n-k+1)} \big\} \\
& \qquad - \Pr\big\{ -|u|_{(n-k+1)} \leq u_j \leq |u|_{(n-k+1)} \big\} ~.
\end{split}
\end{equation} 
\end{lemma}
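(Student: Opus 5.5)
With $\phi=0.5$ we have $b_i = u_i/2$ and $\dot\theta_j = (y_j+\mu_j)/2$, so $|b|_{(n-k+1)} = |u|_{(n-k+1)}/2$. The plan is to rewrite the event $\theta_j \in \mathcal{C}_j(k;0.5)$ as an event about $u_j$ and the calibration order statistics, and then compare it with the prediction event of Lemma \ref{lemma:IID:pred}. The containment $\theta_j\in\mathcal{C}_j(k;0.5)$ is $|\theta_j - \dot\theta_j| \le |u|_{(n-k+1)}/2$. Since $\theta_j - \dot\theta_j = \theta_j - \mu_j - u_j/2$, doubling gives
\[
-|u|_{(n-k+1)} \le u_j - 2(\theta_j-\mu_j) \le |u|_{(n-k+1)},
\]
up to a sign inside the absolute value, which is immaterial. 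This is exactly the first probability in \eqref{mis:CSI}. The second probability there is the prediction event $y_j\in\mathcal{C}_j(k;0.5)$, since $|y_j-\dot\theta_j| = |u_j|/2$. By Lemma \ref{lemma:IID:pred} (IID scores, $\mu$ fixed) that probability equals $1-\frac{k}{n+1}$. Hence the confidence coverage equals $1-\frac{k}{n+1}+\epsilon_{\mathcal{C}}$ identically, which settles the "otherwise" clause.

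It then remains to show $\epsilon_{\mathcal{C}}=0$ in cases (i) and (ii). Case (i) is immediate: if $\mu_j=\theta_j$ the two events coincide. For case (ii), write $d_j=\theta_j-\mu_j$, so that $u_j = d_j + e_j$ and $u_j - 2d_j = e_j - d_j$. If $e_j$ is symmetric about zero given $x_j$, then $(d_j, e_j)$ and $(d_j,-e_j)$ have the same law. Also, $(x_j,e_j)$ is independent of the calibration units, so the joint law of $(x_j, e_j, \{u_i\}_{i\le n})$ is unchanged when $e_j$ is replaced by $-e_j$. Under this replacement $e_j - d_j \mapsto -(e_j+d_j) = -u_j$, and the symmetric interval $[-|u|_{(n-k+1)}, |u|_{(n-k+1)}]$ is invariant under negation. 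So the two probabilities agree and $\epsilon_{\mathcal{C}}=0$.

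The main point needing care is the precise form of the symmetry assumption. The flip argument needs $e_j$ symmetric conditionally on $x_j$, or at least jointly with $d_j=\theta(x_j)-\mu(x_j)$, not merely marginally. I would state it as conditional symmetry of $e_i$ given $x_i$, which is natural in the regression setting. A secondary check is that the calibration order statistic $|u|_{(n-k+1)}$ is untouched by the flip, which holds because only the holdout unit's error is negated and independence across units is preserved. Ties have probability zero under continuity, so the prediction coverage from Lemma \ref{lemma:IID:pred} is exact.
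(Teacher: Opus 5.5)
Your proof is correct and follows the paper's own argument. Both reduce the confidence and prediction events to $-|u|_{(n-k+1)} \le u_j - 2(\theta_j-\mu_j) \le |u|_{(n-k+1)}$ versus $-|u|_{(n-k+1)} \le u_j \le |u|_{(n-k+1)}$, obtain (i) because the two events coincide, and obtain (ii) from the reflection $y_j \mapsto 2\theta_j - y_j$ (your $e_j \mapsto -e_j$) together with the symmetry of the interval. Your added remarks are a useful sharpening that the paper's chain of distributional ``equivalences'' uses only implicitly: the symmetry should hold conditionally on $x_j$, and the holdout must be independent of the calibration scores.
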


We notice that $\phi =0.5$ is the constant choice for confidence inference, either by ASCI or CSI, which does not require tuning in different applications. 
Notice that Lemmas \ref{lemma:IID:ACSI} and \ref{lemma:IID:CSI} reveal different conditions for exact confidence coverage: while both would achieve it given symmetric distribution of $y_i$ around its expectation $\theta(x_i)$, the CSI would also achieve exact coverage if $\mu_j = \theta_j$, whereas correct specification of $\mu_j = \theta_j$ by itself cannot guarantee exact coverage by the ACSI.

\subsection{Miscoverage (I)} \label{sec:mis_cvr_1}

When it comes to the miscoverage probability \eqref{mis:ACSI} by the ACSI and \eqref{mis:CSI} by the CSI, there exists an important difference. The CSI can approximately achieve the nominal level of coverage as long as a location shift from $u_j$ to $u_j -2(\theta_j - \mu_j)$ does not materially affect their coverage by $[-|u|_{(n-k+1)}, |u|_{(n-k+1)}]$; whereas the ACSI can do so if the same holds when $u_j -2(\theta_j - \mu_j)$ is flipped around to $2(\theta_j - \mu_j) - u_j$ in addition, which only depends on the symmetry of $y_j - \theta_j$ but not $\mu_j -\theta_j$, since the expectation of $u_j$ is equal to the expectation of $2(\theta_j - \mu_j) - u_j$. 

\begin{figure}[ht]
\centering
\includegraphics[scale=0.55]{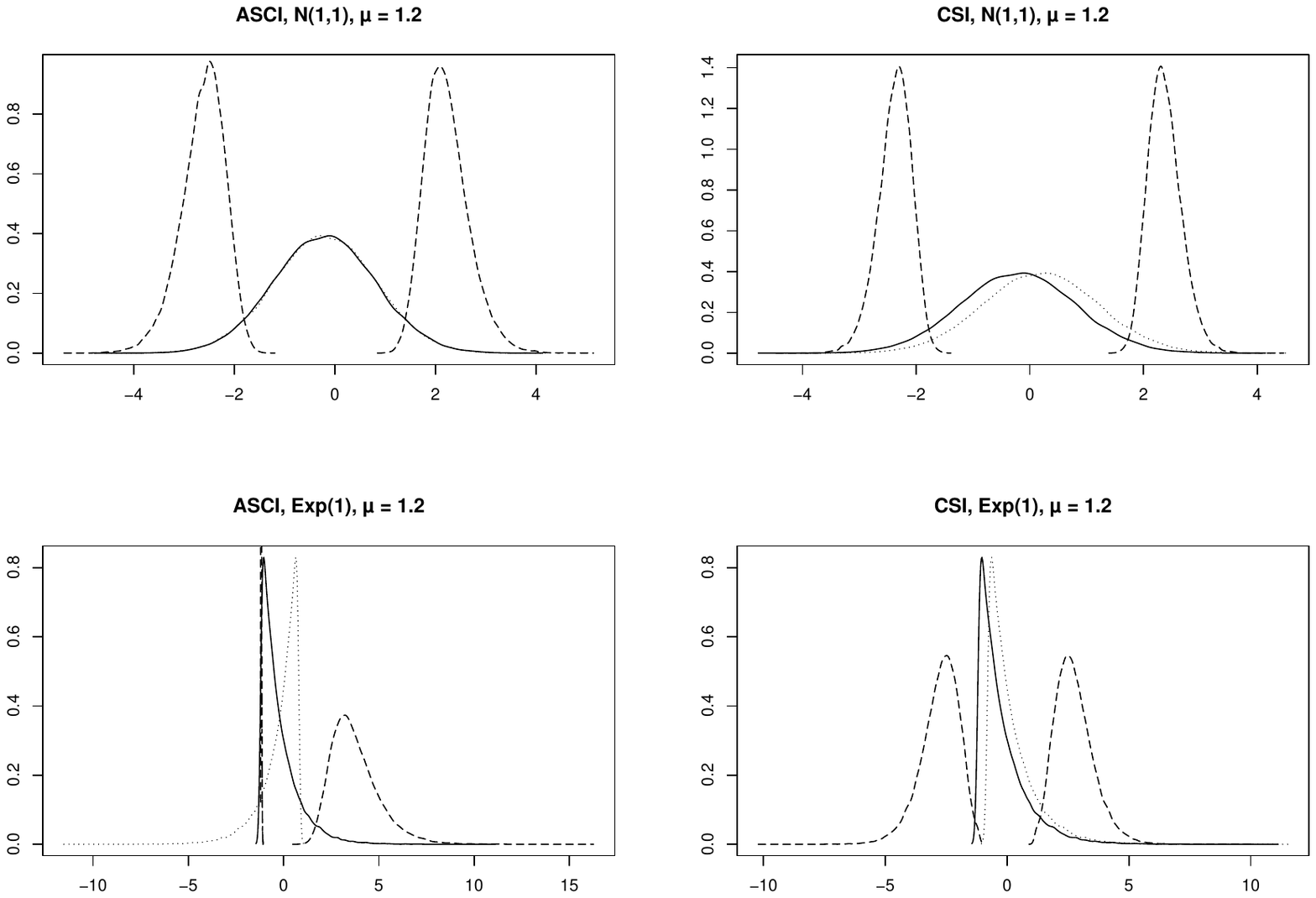}
\caption{Density of $u_j$ (solid). For ACSI: density of  $2(\theta-\mu) -u_j$ (dotted), $u_{(1)}$ or $u_{(n)}$ (dashed). For CSI: density of $u_j - 2(\theta-\mu)$ (dotted), $-|u|_{(n-1)}$ or $|u|_{(n-1)}$ (dashed). For all: calibration set size $n=79$, $\theta_i \equiv 1$, $\mu_i \equiv 1.2$, nominal level of confidence coverage $97.5\%$.} \label{fig:density}
\end{figure}

Figure \ref{fig:density} provides an illustration of these effects. Let $n = 79$ in all the cases. Let the nominal level of coverage be $97.5\%$, which corresponds to the choice of $k=1$ for the ACSI and $k=2$ for the CSI.

In the top row, we have IID $y_i \sim \text{N}(1,1)$ with $\theta_i \equiv 1$. Let $\mu_i \equiv 1.2$ exemplify a `tolerable' extent of misspecification, the density of $u_j = y_j -\mu_j$ is indicated by the solid line.  
\begin{itemize}[leftmargin=6mm,itemsep=0pt]
\item In the top-left plot regarding the ACSI, we show the density of $u_{(1)}$ and $u_{(n)}$ by the dashed lines, and the density of $2(\theta_j - \mu_j) - u_j$ by the dotted line. Since $y_i$ is symmetrically distributed around $\theta_i$, and the expectation of $u_j$ is always equal to that of $2(\theta_j - \mu_j) - u_j$, the distribution of $u_j$ is the same as that of $2(\theta_j - \mu_j) - u_j$. Clearly, the miscoverage \eqref{mis:ACSI} is $0$ in this case. 
\item In the top-right plot regarding the CSI, we show the density of $-|u_{(n-1)}|$ and $|u_{(n-1)}|$ by the dashed lines, and the density of $u_j - 2(\theta_j - \mu_j)$ by the dotted line. Notice the location shift due to the different expectations of $u_j$ and $u_j - 2(\theta_j - \mu_j)$, i.e. $\theta - \mu$ vs. $\mu -\theta$, which however does not cause miscoverage by Lemma \ref{lemma:IID:CSI}. 
\end{itemize}

Next, in the bottom row, we have IID $y_i \sim \text{Exp}(1)$ with $\theta_i \equiv 1$. Again, let $\mu_i \equiv 1.2$, and let the corresponding density of $u_j = y_j -\mu_j$ be indicated by the solid line. 
\begin{itemize}[leftmargin=6mm,itemsep=0pt]
\item In the bottom-left plot regarding the ACSI, we show the density of $u_{(1)}$ and $u_{(n)}$ by the dashed lines, where the density of $u_{(1)}$ is highly concentrated around its expectation. The density of $2(\theta_j - \mu_j) - u_j$ indicated by the dotted line flips around that of $u_j$, where the two have the same expectation. The long tail of $2(\theta_j - \mu_j) - u_j$ extends now to the left, with an appreciable probability mass beyond the probable values of $u_{(1)}$. This causes a large miscoverage \eqref{mis:ACSI}, where the confidence coverage level by the ACSI is only $86.3\%$ instead of the nominal level $97.5\%$.

\item In the bottom-right plot regarding the CSI, where the density of $-|u_{(n-1)|}$ and $|u_{(n-1)}|$ are given by the dashed lines, the density of $u_j - 2(\theta_j - \mu_j)$ indicated by the dotted line exhibits only a location shift to $u_j$. The resulting confidence coverage is $96.4\%$, still reasonably close to the nominal level $97.5\%$, because the location shift is small compared to the probable range $[-|u_{(n-1)|}, |u_{(n-1)}|]$.
\end{itemize}

The above example of Exp(1) shows that miscoverage by the ACSI may be much more severe than that of the CSI, when $y_i - \theta_i$ is far from being symmetrically distributed. In addition to building the best possible regression function $\mu(x)$, transforming the dependent variable for $y_i$ to have symmetrically distributed regression errors remains a desirable goal in practice, just like in classical regression inference.

\subsection{Design-based split-CSI} \label{sec:design}

In the above we have assumed a given (or pre-trained) regression function $\mu(x)$. Split-conformal intervals can easily accomplish the training that is needed in practice. First, one would use a random training set $s_1$ to obtain a regression function, denoted by $\mu(x, s_1)$, which yields value $\mu_i = \mu(x_i, s_1)$ for any unit $i\notin s_1$; second, one would use another random calibration set $s_2$ to obtain the shrinkage scores, which can be used to form the conformal confidence intervals for any holdout $\theta_j$, $j\notin s_1 \cup s_2$. 

\begin{center} 
\begin{tabular}{ll} 
\multicolumn{2}{l}{\textit{Algorithm-I. Design-based split-CSI conditional on reference set $s$}} \\ \toprule
\multicolumn{2}{l}{Input: $|s| =m$, $\{ (y_i, x_i): i\in s\}$, $n_1$, $\alpha = 1-\tfrac{k}{m - n_1}$, $\mu(x)$} \\
1: & select $s_1$ of size $n_1$ from $s$  by SRSWOR, obtain $\mu(x, s_1)$; \\
2: & select $j$ randomly from $s\setminus s_1$, let $s_2 = (s\setminus s_1) \setminus \{ j\}$, $n_2 = m-n_1 -1$; \\
3: & for each $i\in s\setminus s_1$, do \\
4: & \quad $\mu_i = \mu(x_i, s_1)$, $\dot{\theta}_i = \tfrac{1}{2} (y_i + \mu_i)$, $b_i = \tfrac{1}{2} (y_i - \mu_i)$; \\
5: & end for \\
6: & obtain the order statistics $|b|_{(1)}, ..., |b|_{(n_2)}$ of $\{ |b_i| : i\in s_2\}$; \\  
\multicolumn{2}{l}{Output: $\mathcal{C}_j(k) = \big[ \dot{\theta}_j - |b|_{(n_2 -k+1)},~ \dot{\theta}_j + |b|_{(n_2 -k+1)} \big]$} \\ \bottomrule
\end{tabular} 
\end{center}

Algorithm-I gives the split-sampling design, which yields the \emph{split-CSI}, conditional on the realised sample $\{ (y_i, x_i) : i\in s\}$. Similarly, one can obtain the split-ACSI $\mathcal{A}_j(k)$ using $b_{(k)}$ and $b_{(n_2 -k+1)}$ instead of $|b|_{(n_2 -k+1)}$. Notice that simple random sampling without replacement (SRSWOR) of $s_1$ from $s$ generates exchangeable joint distribution of $\{ (y_i, \mu_i, \dot{\theta}_i, b_i) : i\in s\setminus s_1\}$ by design. Moreover, the holdout unit $j$ is randomly selected from $s\setminus s_1$, which ensures the marginal prediction coverage of $y_j$ as given below. 

\begin{lemma} \label{lemma:SRS:pred}
For split-CSI by Algorithm-I, we have $\frac{1}{m} \sum_{j=1}^m \Pr\{ y_j \in \mathcal{C}_j(k) \} = 1 - \tfrac{k}{n_2 +1}$.
\end{lemma}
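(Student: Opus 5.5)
The plan is to compute the probability in the statement entirely with respect to the split-sampling design of Algorithm-I, holding the realised data $\{(y_i,x_i):i\in s\}$ fixed. Exchangeability then comes from the design rather than from any model, and the usual rank argument for a uniformly chosen holdout gives the result.

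First I would reduce coverage to a rank condition. For the holdout $j$ we have $y_j-\dot{\theta}_j = \tfrac12(y_j-\mu_j) = b_j$. Hence $y_j\in\mathcal{C}_j(k)$ if and only if $|b_j|\le |b|_{(n_2-k+1)}$, where the order statistic is taken over the calibration set $s_2=(s\setminus s_1)\setminus\{j\}$. Given $s_1$, the regression function $\mu(x,s_1)$ is fixed. So the $n_2+1$ values $\{|b_i|: i\in s\setminus s_1\}$ are fixed numbers, and only the choice of $j$ among them remains random.

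Next I would make precise what $\Pr\{y_j\in\mathcal{C}_j(k)\}$ means for a fixed label $j$: it is the coverage probability conditional on $j$ being selected as the holdout. Under Algorithm-I, unit $j$ is the holdout with probability $\Pr(j\notin s_1)\cdot\frac{1}{m-n_1}=\frac{m-n_1}{m}\cdot\frac{1}{m-n_1}=\frac1m$. Therefore
\[
\frac1m\sum_{j=1}^m \Pr\{y_j\in\mathcal{C}_j(k)\mid J=j\} \;=\; \Pr\{y_J\in\mathcal{C}_J(k)\},
\]
where $J$ is the random holdout. This is the marginal coverage of the randomly selected holdout under the design, and I would write it as $\mathbb{E}_{s_1}\big[\Pr\{y_J\in\mathcal{C}_J(k)\mid s_1\}\big]$.

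Conditional on $s_1$, $J$ is uniform over the $n_2+1$ units of $s\setminus s_1$. The holdout is covered exactly when $|b_J|$ does not exceed the $(n_2-k+1)$-th smallest of the other $n_2$ values. This happens if and only if the rank of $|b_J|$ among all $n_2+1$ values is at most $n_2-k+1$. Assuming no ties, which holds almost surely for continuous $y$, exactly $n_2+1-k$ of the $n_2+1$ equally likely choices of $J$ satisfy this. The conditional coverage is therefore $1-\frac{k}{n_2+1}$ for every $s_1$, and averaging over $s_1$ gives the claim.

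The main obstacle is interpretive rather than technical. It lies in pinning down the meaning of $\Pr\{y_j\in\mathcal{C}_j(k)\}$ for fixed $j$ under the design, and in handling ties among the $|b_i|$. With ties, the rank argument gives coverage $\ge 1-\frac{k}{n_2+1}$; exact equality can be restored either by randomised tie-breaking or by assuming distinct values. I would state the no-ties assumption explicitly.
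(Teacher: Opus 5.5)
Your proposal is correct and takes essentially the paper's route: reduce coverage to $|b_j|\le |b|_{(n_2-k+1)}$, then use the uniformly random choice of the holdout among the $n_2+1$ units of $s\setminus s_1$. Your rank count conditional on $s_1$, the computation $\Pr(J=j)=1/m$ that justifies the $\frac{1}{m}\sum_j$ form, and the no-ties caveat make precise what the paper's proof attributes to design exchangeability and ``repeated split-sampling''.
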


It is important to notice that the prediction coverage is marginal over all the units, because the specific indices of two distinct units are not exchangeable with each other. For instance, suppose $(y_1, y_2, y_3) = (1, 2, 1000)$ for three units $s\setminus s_1$, and $\theta_i = \theta = \mu_i = \mu$ for simplicity, then $y_3$ will not be covered by $C_3(k=1)$ because $|b_3|$ is much larger than $|b_1|$ and $|b_2|$ when $s_2 = \{ 1, 2\}$, whilst $y_1$ or $y_2$ will be covered by $\mathcal{C}_1(k=1)$ or $\mathcal{C}_2(k=1)$, such that the coverage is $1-\tfrac{1}{3} = \tfrac{2}{3}$ marginally but not for any specific unit.   

One can implement sample-splitting for each in-sample unit $j$, yielding $s_1 \cup s_2 = s \setminus \{ j\}$, with the respective subsample sizes $n_1$ and $n_2$. In this way, all the CSIs can be obtained over $m$ splits. The sample space of $(s_1, s_2)$ by such conditional split-sampling given the holdout $j$ is the same as that of unconditional split-sampling by Algorithm-I, which happens to yield the same $j$, and each distinct $(s_1, s_2)$ have the same sampling probability. 

When it comes to the choice of $(n_1, n_2)$, it is common in the literature of conformal prediction to let $n_1 \approx n_2$ (e.g. Lei et al., 2018). The choice needs to balance between the training efficiency of $\mu(x, s_1)$, which favours larger $n_1$, and the scoring efficiency of $|b|_{(n_2 -k+1)}$, which favours larger $n_2$. Since the same contrasting forces exist for the design-based split-CSI, \emph{half-sample training} remains the default choice. 

Now, for confidence inference of $\{ \theta_i : i\in s\}$ conditional on $s$, we have the following result for the split-CSI regardless the distribution $\mathcal{P}$ of $\{ (y_i, x_i) : i\in s\}$.

\begin{lemma} \label{lemma:SRS:CSI}
The split-CSI by Algorithm-I has exactly the marginal confidence coverage \eqref{cvr_conf} with respect to the split-sampling design, where $\alpha = 1- \tfrac{k}{n_2 +1}$, (i) if $\mu(x_i, s_1) = \theta_i$ for $i\in s\setminus s_1$, or (ii) if $y_i -\theta_i \overset{\text{D}}{\simeq} \theta_i -y_i$ under the split-sampling design. Otherwise, the marginal miscoverage probability is given by
\begin{equation} \label{mis:SRS:CSI}
\epsilon_{\mathcal{C}}^{\text{split}} = \frac{1}{m} \sum_{j=1}^m \Pr\big\{ -|u|_{(n_2 -k+1)} \leq u_j - 2 (\theta_j - \mu_j) \leq |u|_{(n_2 -k+1)} \big\} - \alpha ~.
\end{equation} 
\end{lemma}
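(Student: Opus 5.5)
The plan is to reduce the coverage event for $\theta_j$ to a coverage event for a transformed score, and then lean on the design-based exchangeability already used for Lemma \ref{lemma:SRS:pred}. With $\phi = 0.5$ we have $\dot{\theta}_j = \tfrac12(y_j+\mu_j)$ and $|b_i| = \tfrac12|u_i|$. So $\theta_j \in \mathcal{C}_j(k)$ is equivalent to
\[
|\dot\theta_j - \theta_j| \le |b|_{(n_2-k+1)} \iff \big|\tfrac12 u_j - (\theta_j-\mu_j)\big| \le \tfrac12 |u|_{(n_2-k+1)} \iff -|u|_{(n_2-k+1)} \le u_j - 2(\theta_j-\mu_j) \le |u|_{(n_2-k+1)} .
\]
Averaging over $j$ with respect to the split-sampling design gives the first term of \eqref{mis:SRS:CSI}. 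Since $\alpha = 1-\tfrac{k}{n_2+1}$ is the marginal prediction coverage from Lemma \ref{lemma:SRS:pred}, namely $\tfrac1m\sum_j \Pr\{-|u|_{(n_2-k+1)}\le u_j\le |u|_{(n_2-k+1)}\}$, the difference is exactly $\epsilon^{\text{split}}_{\mathcal{C}}$. This settles the ``otherwise'' part. Here the probability is over SRSWOR of $s_1$ and the random holdout, with $\{(y_i,x_i): i\in s\}$ held fixed. The $\tfrac1m\sum_j \Pr$ notation is read as the design probability that the randomly selected holdout is covered, which is equivalent by the symmetric construction over the $m$ conditional splits.

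For case (i), $\mu(x_i,s_1)=\theta_i$ for all $i\in s\setminus s_1$ makes the shift $2(\theta_j-\mu_j)$ vanish identically for every realised split. The two events then coincide and the miscoverage is zero.

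For case (ii), I will write $u_j - 2(\theta_j-\mu_j) = (y_j - \theta_j) - (\theta_j - \mu_j) = -\big[(\theta_j - y_j) + (\theta_j-\mu_j)\big]$. The interval $[-|u|_{(n_2-k+1)}, |u|_{(n_2-k+1)}]$ is symmetric about zero, so the event is unchanged under a sign flip. It is therefore $\{|(\theta_j-y_j)+(\theta_j-\mu_j)|\le |u|_{(n_2-k+1)}\}$, whereas the prediction event is $\{|(y_j-\theta_j)+(\theta_j-\mu_j)|\le |u|_{(n_2-k+1)}\}$. The hypothesis $y_i-\theta_i \overset{\text{D}}{\simeq} \theta_i - y_i$ under the design is to be read as follows: replacing $e_j$ by $-e_j$ for the holdout leaves the joint design distribution of $(e_j,\ \mu_j,\ \theta_j,\ |u|_{(n_2-k+1)})$ unchanged. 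Under that reading the two probabilities are equal, and Lemma \ref{lemma:SRS:pred} gives coverage $\alpha$ exactly.

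The main obstacle is making this symmetry condition precise. Under a fixed finite sample, ``distribution'' only means the design distribution induced by the random split. The flip of $e_j$ must be taken jointly with $\mu_j$ (which depends on $s_1$) and with the calibration order statistic, since a marginal flip alone would not suffice. I would state the condition in this joint form, mirroring how symmetry of $e_i$ combined with independence was used in Lemma \ref{lemma:IID:CSI}(ii). I would then note that the proof simply transfers that argument, with the design-induced exchangeability of $\{(y_i,\mu_i,\dot\theta_i,b_i): i\in s\setminus s_1\}$ replacing the IID assumption.
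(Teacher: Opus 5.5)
Your proposal is correct and follows essentially the paper's route: reduce $\theta_j\in\mathcal{C}_j(k)$ and $y_j\in\mathcal{C}_j(k)$ to the two score events $-|u|_{(n_2-k+1)}\le u_j-2(\theta_j-\mu_j)\le |u|_{(n_2-k+1)}$ and $-|u|_{(n_2-k+1)}\le u_j\le |u|_{(n_2-k+1)}$, take $\alpha$ from Lemma \ref{lemma:SRS:pred}, and obtain (i) from the vanishing shift and (ii) from the sign-flip argument of Lemma \ref{lemma:IID:CSI}. Your explicit requirement that the symmetry in (ii) hold jointly for $(e_j,\mu_j,\theta_j,|u|_{(n_2-k+1)})$ under the design is a worthwhile sharpening. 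The paper leaves this implicit in ``by the same argument'', and in the design-based setting the IID independence that made a marginal flip enough is no longer available.
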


Although a similar result of miscoverage can be given for the split-ACSI, we omit the details here because, as the discussions in Sections \ref{sec:mis_cvr_1} and \ref{sec:mis_cvr_2} show, the coverage of ACSI can easily deteriorate as the distribution of $y_i -\theta_i$ deviates from symmetry. In other words, the ACSI is not a generally viable method for confidence inference, despite the insight it brings to the underlying mechanism.

\subsection{Miscoverage (II)} \label{sec:mis_cvr_2}

Both the miscoverage \eqref{mis:SRS:CSI} of split-CSI and the interval width would be reduced, if one manages to reduce the discrepancies $|\theta_i - \mu_i|$ by the choice of $\mu(x)$, which is a primary objective for regression. The result below is instructive in this respect. 
 
\begin{coro} \label{coro:null}
For confidence inference of the marginal expectation $\theta = \mathbb{E}(y_i)$ with respect to $\mathcal{P}$, the split-CSI using $\mu(x_i, s_1) = \tfrac{1}{n_1} \sum_{i\in s_1} y_i$ attains the marginal confidence coverage \eqref{cvr_conf} asymptotically, as $n_1 \rightarrow \infty$.
\end{coro}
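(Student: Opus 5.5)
The plan is to reduce the corollary to Lemma \ref{lemma:SRS:CSI}, case (i). In this setting the target is the constant $\theta_i \equiv \theta = \mathbb{E}(y_i)$ for every $i\in s$. The trained function $\mu(x_i,s_1)=\bar y_{s_1}=\tfrac{1}{n_1}\sum_{i\in s_1}y_i$ is also a constant across all $i\in s\setminus s_1$. By the law of large numbers under $\mathcal{P}$ we have $\bar y_{s_1}\to\theta$ as $n_1\to\infty$, so condition (i), $\mu(x_i,s_1)=\theta_i$, holds in the limit. It then suffices to show that the miscoverage \eqref{mis:SRS:CSI} tends to zero.

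Write $\delta=\theta-\bar y_{s_1}$, which is common to all holdout units. Then $u_i=y_i-\bar y_{s_1}$, and the event in \eqref{mis:SRS:CSI} is
\[
-|u|_{(n_2-k+1)}\le u_j-2\delta\le |u|_{(n_2-k+1)} .
\]
Compare this with the event without the shift $2\delta$. By Lemma \ref{lemma:SRS:pred} (with $\phi=0.5$, where the scores $|b_i|=\tfrac12|u_i|$ give the same ordering as $|u_i|$), the unshifted event has marginal probability exactly $\alpha=1-\tfrac{k}{n_2+1}$. Hence
\[
|\epsilon^{\text{split}}_{\mathcal{C}}|\le \frac1m\sum_j \Pr\big\{\,\big||u_j|-|u|_{(n_2-k+1)}\big|\le 2|\delta|\,\big\}.
\]
This is the probability that $|u_j|$ lies within a band of width $4|\delta|$ around the calibration quantile.

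The main obstacle is controlling this band probability, because it involves random quantities that are not independent of $\delta$. I would split on the event $\{|\delta|\le\eta\}$. Its complement has probability tending to zero as $n_1\to\infty$, by Chebyshev if $\mathbb{V}(y_i)<\infty$, or by the weak LLN in general. On $\{|\delta|\le\eta\}$, and conditioning on $s_1$ (so that $\{|u_i|:i\in s\setminus s_1\}$ is exchangeable by design, as noted after Algorithm-I), the rank of $|u_j|$ among the $n_2+1$ holdout and calibration values is uniform. The band event therefore requires $|u_j|$ to lie in an interval of length $2\eta$ around a neighbouring order statistic.

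To bound this I would assume that $y$ has a continuous distribution with bounded density under $\mathcal{P}$, at least locally. Then the probability that a given unit falls within $2\eta$ of the $(n_2-k+1)$th order statistic of the others is $O(\eta)$ uniformly, after integrating over the order statistic. Letting $n_1\to\infty$ and then $\eta\to0$ gives $\epsilon^{\text{split}}_{\mathcal{C}}\to0$. This establishes \eqref{cvr_conf} asymptotically. If the reference set is treated as fixed under the split-sampling design, one would instead pass to $m\to\infty$ with $n_1\to\infty$ and use the empirical distribution of $\{y_i\}$, with a no-atoms assumption playing the role of the density bound.
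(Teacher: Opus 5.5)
Your argument is correct under the regularity you add. It follows the paper's outline: consistency of $\bar y(s_1)$ for $\theta$ makes the shift $2(\theta_j-\mu_j)$ in \eqref{mis:SRS:CSI} vanish, so the confidence coverage reverts to the prediction coverage $1-\tfrac{k}{n_2+1}$ of Lemma~\ref{lemma:SRS:pred}.

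The difference lies in how the final limit is justified. The paper settles it in one line with the continuous mapping theorem. You instead bound $|\epsilon_{\mathcal{C}}^{\text{split}}|$ by the probability that $|u_j|$ falls within $2|\delta|$ of $|u|_{(n_2-k+1)}$, split on $\{|\delta|\le\eta\}$, and control the band by a bounded-density (or no-atoms) assumption. This buys real rigour. The coverage event compares the shifted score with a random, $n_1$-dependent threshold through a discontinuous indicator, so convergence in distribution of $u_j-2(\theta_j-\mu_j)$ to $u_j$ does not by itself make the two probabilities merge. An anti-concentration condition like yours is needed. It is in fact already implicit in the exactness claimed in Lemma~\ref{lemma:SRS:pred}, which holds only as an inequality under ties. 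Your bound is also uniform in $n_2$ and $k$.

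Likewise, your law of large numbers under $\mathcal{P}$ is what the consistency step actually requires. SRSWOR unbiasedness and vanishing design variance alone give only $\bar y(s_1)-\bar y_s\to 0$, not $\bar y(s_1)\to\theta$.

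The one thing the paper's formulation nominally keeps, and your main line gives up, is the purely design-based reading conditional on $s$. Your density bound is a probability over $\mathcal{P}$ jointly with the split-sampling design, and it needs conditionally bounded densities (e.g.\ independence across units). The conditional-on-$s$ version, through the empirical distribution of $\{y_i\}$ with $m\to\infty$, is only sketched in your last sentence.
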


Notice that the target of confidence inference is the marginal expectation $\theta$ of $y_i$, not the conditional expectation $\theta(x_i)$. The null model with $x_i \equiv 1$, or $\mu_i \equiv \mu$, is correctly specified in this respect, and one would estimate $\mu$ simply by the training set mean $\bar{y}(s_1) = \tfrac{1}{n_1} \sum_{i\in s_1} y_i$. Since $\bar{y}(s_1)$ converges to $\theta$ in probability with respect to SRSWOR, as $n_1 \rightarrow \infty$ asymptotically, $u_j - 2 (\theta_j - \mu_j)$ converges to $u_j$ in distribution, such that the miscoverage \eqref{mis:SRS:CSI} tends to $0$. 

Let us illustrate this by revisiting the example in Section \ref{sec:mis_cvr_1}. Let the sample $s$ consist of $m=80$ units. Let $n_1 = 40$ for half-sample training, such that the nominal level of coverage is $0.95$ with $k=1$ for the split-ACSI and $k=2$ for the CSI. 

Let $\theta \equiv 1$ in all the cases to be considered. In addition to $\text{N}(1,1)$ and $\text{Exp}(1)$, we consider $\text{logN}(-0.5,1)$, as well as two mixture samples where the first $40$ $y_i$'s are drawn from either $\text{N}(1,1)$ or $\text{logN}(-0.5,1)$ and the last $40$ $y_i$'s are drawn from $\text{Exp}(1)$. Notice that the two mixture samples do not satisfy the assumption of IID or exchangeability for model-based conformal prediction inference.  

\begin{table}[ht]
\centering
\caption{Monte Carlo coverage of split-CSI or split-ASCI.}
\begin{tabular}{ccccc} \toprule
 & \multicolumn{2}{c}{Prediction coverage} & \multicolumn{2}{c}{Confidence coverage} \\
Distribution $\mathcal{P}$ & Split-CSI & Split-ACSI & Split-CSI & Split-ASCI \\ \hline 
IID $\text{N}(1,1)$ & 0.950 & 0.950 & 0.950 & 0.958 \\   
IID $\text{logN}(-0.5,1)$ & 0.950 & 0.950 & 0.950 & 0.900 \\ 
IID $\text{Exp}(1)$ & 0.950 & 0.950 & 0.950 & 0.843 \\ 
Half $\text{N}(1,1)$, half $\text{Exp}(1)$ & 0.950 & 0.950 & 0.952 & 0.943 \\ 
Half $\text{logN}(-0.5,1)$, half $\text{Exp}(1)$ & 0.950 & 0.950 & 0.951 & 0.899 \\ \bottomrule 
\end{tabular} \label{tab:coro}
\end{table}

Table \ref{tab:coro} shows the Monte Carlo results, where we simulate a sample of $80$ $y$-values in each case. Given the realised sample, we repeat the split-sampling $10^3$ times to calculate the marginal prediction and confidence coverages of the split-CSI and split-ACSI, respectively; the number of repetitions is sufficient for significant results to the 3rd digit in Table \ref{tab:coro}. 

It can be seen that the confidence coverage of the split-ACSI decays quickly in the case of $\text{logN}(-0.5,1)$ or $\text{Exp}(1)$ due to the asymmetric distribution of $e_i = y_i -\theta_i$, whereas the coverage in each mixture case---$0.943$ or $0.899$---falls between those of the two `component' cases, i.e. $0.958$ and $0.843$, or $0.900$ and $0.843$.

In contrast, the confidence coverage of the split-CSI is nearly equal to the nominal level $95\%$ in all the IID cases, although $n_1 =40$ is far from the large-sample setting of Corollary \ref{coro:null}. In both the mixture samples, the split-sampling design for in-sample confidence inference yields near-nominal level of coverage by the split-CSI.

Of course, the null model with $\mu_i \equiv \mu$ is not really useful for regression. The example here serves only to provide some insight to the robustness of design-based split-CSI. In practice, one should focus on limiting the discrepancies $|\theta_i - \mu_i|$, to achieve near-nominal level of coverage and to reduce the width of split-CSI.

\section{Application} \label{sec:application}

The Italian National Institute for Statistics, Istat, recently moved from the traditional decennial census to integrated use of administrative data and annual sample surveys to produce census-related statistics. We consider the commuting statistics in this application. Due to the limited sample size of the relevant survey, SAE techniques are needed to produce statistics at detailed geographical levels. 

\subsection{Data and results}

In terms of the setup introduced in Section \ref{sec:introduction}, let $\mathcal{D}$ contain the sampled municipalities (or areas henceforth) in the region Tuscany, where $m=164$. For the FH-model \eqref{FHmod}, let the area parameter $Y_i$ be the ratio of commuters to the area population size in 2021 according to the Population Register. Let the outcome variable $y_i$ be the direct estimate of $Y_i$ based on the Italian population permanent census survey in 2021, where $\theta_i = \mathbb{E}(y_i \mid Y_i) = Y_i$ with respect to finite population sampling. The feature $x_i$ used below is the ratio between the 2011 census commuter count and the 2021 area population size; we use the latter as the denominator since the Population Register was not as well-developed in 2011 as in 2021. 

\begin{figure}[ht]
\centering
\includegraphics[width=13cm,height=5cm]{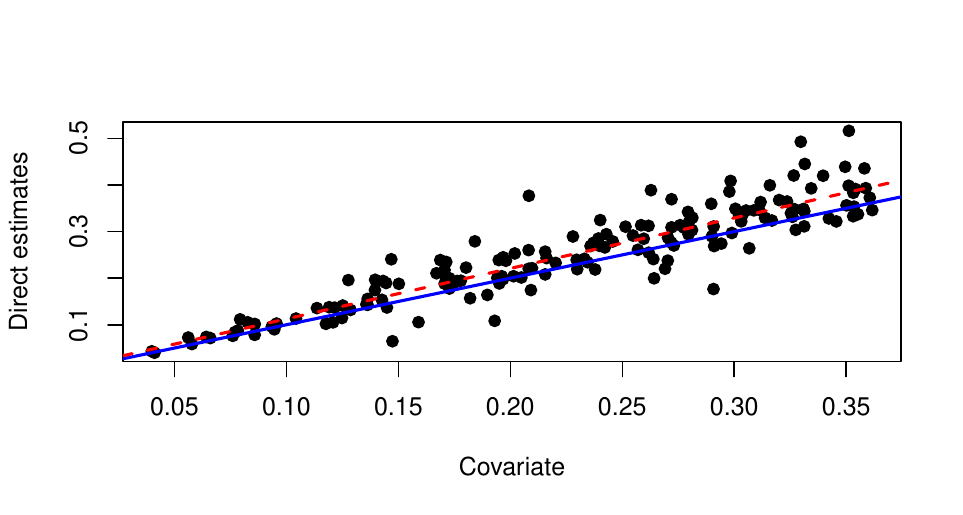}
\caption{Direct estimate vs. feature. OLS slope $\beta$ (dashed), $\beta = 1$ (solid).} \label{fig:xy}
\end{figure}

\begin{table}[ht] 
\centering
\caption{Summary statistics of $y_i$, $x_i$, and CV of $y_i$ over 164 areas.} 
\begin{tabular}{c|c|c|c|c|c|c} \toprule
 & Min.  & 1st Q. & Median & Mean & 3rd Q. & Max.  \\   \hline
$y_i$ & 0.0401 & 0.1868 & 0.2558 & 0.2509 & 0.3295 & 0.5163 \\
$x_i$ & 0.0399 & 0.1683 & 0.2375 & 0.2276 & 0.2993 & 0.3616 \\ 
CV (\%) & 2.180 & 5.021 & 6.744 & 7.700 & 9.302 & 20.650 \\ \bottomrule
\end{tabular} \label{tab:summary}
\end{table}

The scatter plot of $(x_i, y_i)$ is given in Figure \ref{fig:xy}. Adopting the regression function $\mu(x_i) = x_i \beta$ appears reasonable, which is not surprising since the commuter pattern changes rather slowly over time. Table \ref{tab:summary} shows in addition the summary statistics of $y_i$, $x_i$, and the coefficient of variation (CV) of $y_i$ in percentage.
 
\begin{figure}[ht]
\centering
\includegraphics[scale=0.42]{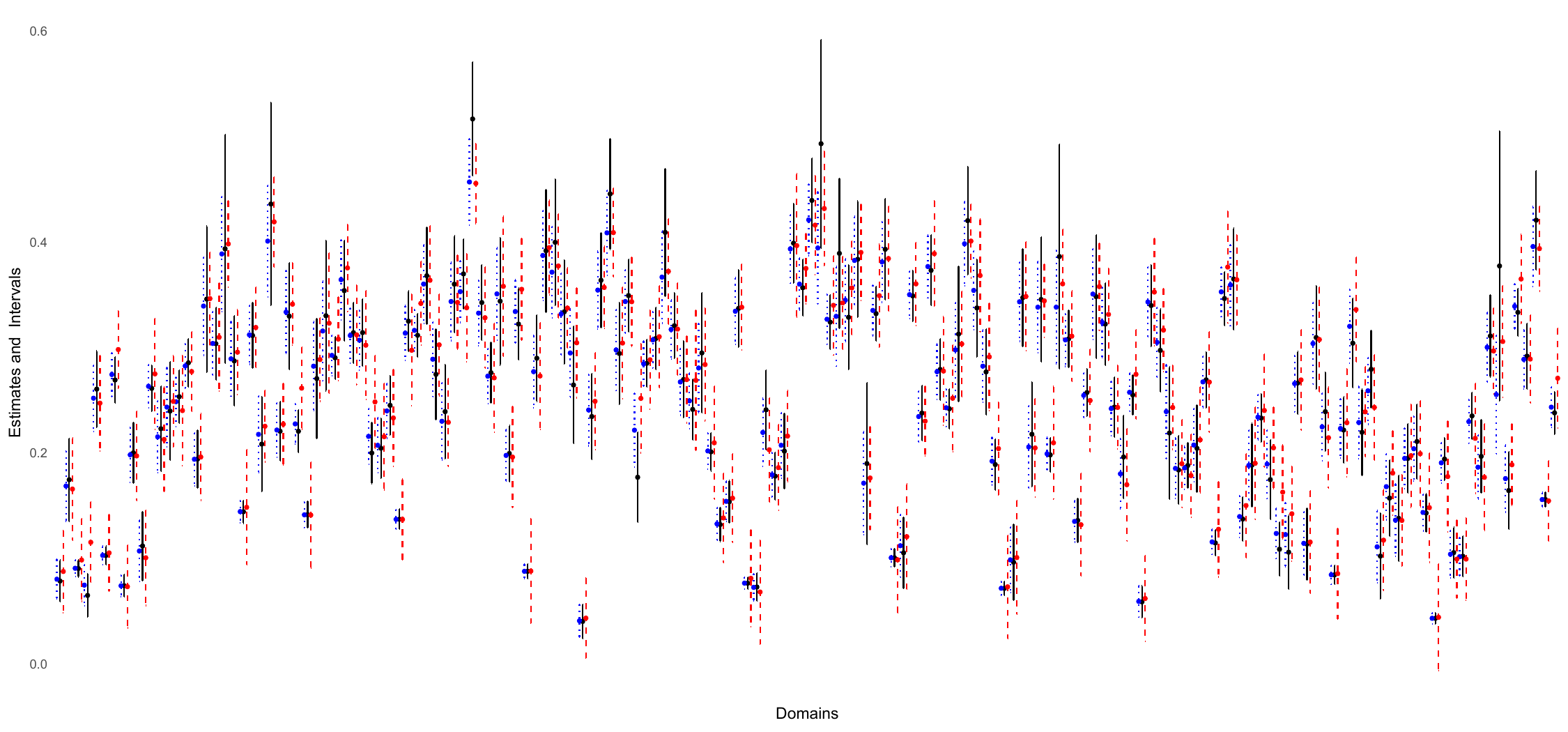} \\
\caption{$95\%$ confidence intervals with marked centres, direct (solid black), EBLUP (dotted blue), split-CSI (dashed red).} \label{fig:CI}
\end{figure}

Figure \ref{fig:CI} shows the confidence intervals with the nominal level $95\%$. The direct interval is centred on $y_i$, using its estimated SE (corresponding to CV in Table \ref{tab:summary}) and assuming $y_i \sim \text{N}(\theta_i, \text{SE}_i)$. The EBLUP interval is centre on the EBLUP, $\hat{\theta}_i^H = x_i \hat{\beta} + \hat{v}_i$, using its estimated MSE under the model \eqref{FHmod} and assuming $\hat{\theta}_i^H - \theta_i \sim \text{N}(0, \text{MSE}_i)$. The CSI is given by Algorithm-I with the reference set $s = \mathcal{D}$ and $n_1 = m/2$, using $\mu(x_i) = x_i \beta$, which does not require any variance or MSE estimation.   

The split-CSI and EBLUP intervals are more closely aligned with each other, both of which pull the more outlying direct intervals towards the rest confidence intervals. This is because the direct interval is centred on $y_i$, where $\{ y_i : i\in s\}$ are over-dispersed compared to $\{ \theta_i \}$ due to the extra sampling variances.  

The interval half-width averaged over all the 164 areas is 0.036 for the direct intervals, 0.029 for the EBLUP intervals, and 0.046 for the CSIs. The split-CSI has a more similar halfwidth across the areas, which can be much wider than the narrowest direct (or EBLUP) intervals, such as in some areas with the smallest $y_i$-values, but much narrower than the widest direct intervals, such as  in some areas with the largest $y_i$-values.

\subsection{Method evaluation} \label{sec:evaluation}

We conduct finite population simulation for method evaluation (Tzavidis et al., 2018, Sec. 4.2). The aim is to facilitate the choice of method, by examining their performance over repeated sampling from similar finite populations. The main concerns include the coverage and efficiency of a given interval estimator, the potential sensitivity of certain methodological elements, such as the sampling variance estimator needed for direct and EBLUP intervals, the misspecification effects of $\mu(x)$ for the EBLUP and CSI, and the choice of $(n_1, n_2)$ for the split-sampling design of CSI.    

\subsubsection{Coverage and efficiency} 

The basic setup of design-based simulation is as follows.
\begin{itemize}[leftmargin=6mm,itemsep=0pt]
\item Let the synthetic population parameters be given as 
\[
Y_i^*(w) = w y_i + (1-w) x_i \quad\text{where}\quad i=1, ..., m.
\]
Different values of the coefficient $w$ are explored, such as $w= 0.1, 0.3, 0.5, 0.7, 0.9$, where the prediction power of $x_i$ increases as $w$ reduces from $1$ to $0$. 
  
\item Given each set of $Y_i^*(w)$ indexed by $w$, we simulate the direct estimates $L$ times. For each $l=1, ..., L$, the direct estimates are generated by $y_i^{(l)} \overset{\text{IID}}{\sim} \text{N}(Y_i^*, \psi_i)$, for $i=1, ..., m$, where $\psi_i$ is the variance of $y_i$ in the actual survey reported earlier. 

\item Base on each sample $\{ y_i^{(l)} : i=1, ..., m\}$, we calculate the confidence intervals as in the actual survey. 
\end{itemize}

Notice that sampling from a given synthetic finite population as above does not replicate the design of the actual survey, because the relevant sampling frame and the details of the sampling design are not available to us. The simulation approach here is convenient for secondary analysts in similar situations. Potential departures from the normal distribution of $y_i$ will be explored in sensitivity analysis later.  

Notice also that for the direct and EBLUP intervals, it is necessary to estimate the sampling variance $\psi_i$ in practice. We consider two options: (i) use the IID model variance estimator $y_i (1-y_i)/n_i$, which ignores the actual design effect; (ii) adjust the IID variance by the design-effect, given as $\psi_i /\{ Y_i^* (1- Y_i^*)/n\}$, which yields approximately unbiased variance estimation over repeated simulations.

\begin{table}[ht] 
\centering
\caption{Marginal confidence coverage ($\%$) by synthetic population ($w$), $\mu(x_i) = x_i \beta$. Unbiased sampling variance $\hat{\psi}_i$ or IID model variance. Nominal level $95.1\%$. Average interval half-width ($\%$) in parentheses. Based on $L=1000$ simulations.} 
\begin{tabular}{c|c|cc|cc} \toprule
& Split-CSI & \multicolumn{2}{c|}{Direct}& \multicolumn{2}{c}{EBLUP}\\
$w$ & $m = 2 n_1$ & $\hat{\psi}_i$ & IID & $\hat{\psi}_i$ & IID \\ \hline
0.1 & 95.1 (2.3) & 94.1 (3.6) & 63.5 (1.6) & 98.8 (1.4) & 69.1 (1.4)\\
0.3 & 95.1 (2.6) & 94.2 (3.6) & 63.8 (1.6) & 86.3 (1.5) & 67.4 (1.4)\\ 
0.5 & 95.1 (3.1) & 94.5 (3.6) & 64.1 (1.6) & 88.0 (2.1) & 66.1 (1.5)\\  
0.7 & 95.1 (3.8) & 94.6 (3.6) & 64.4 (1.6) & 91.0 (2.6) & 65.7 (1.5) \\ 
0.9 & 95.1 (4.6) & 94.8 (3.6) & 64.5 (1.6) & 92.2 (2.9) & 65.4 (1.6) \\ \bottomrule
\end{tabular} \label{tab:cvr_hw}
\end{table}

Table \ref{tab:cvr_hw} reports the marginal confidence coverage \eqref{cvr_conf} of the different intervals over repeated sampling from the finite populations. The average half-width of the intervals over the $m$ areas is given in the parentheses. The half-width of the direct intervals depends only on the sampling variance estimate, which is the same regardless $w$ in the simulation setup here. The half-width of the EBLUP intervals  depends on the estimated variance of the random effects in the FH model \eqref{FHmod} in addition, which increases with $w$. The half-width of the CSI depends on $|\theta_i - \mu_i|$, which increases with $w$; notice that these conformal intervals do not require variance or MSE estimation at all.

As can be expected from Lemma \ref{lemma:SRS:CSI}, the split-CSI achieves the nominal coverage level in all the populations. In the case of unbiased $\hat{\psi}_i$, the direct interval achieves near-nominal levels of converge, whereas the coverage level of the EBLUP interval is erratic across the different populations, and its coverage is not necessarily closer to the nominal level as the discrepancies $|\theta_i -\mu_i|$ decrease with $w$. The coverage of either the direct or EBLUP interval collapses dramatically when one uses the IID model variance that ignores the actual design effects. 

It is worthwhile to point out that, since variance estimation is necessary for the direct interval, and model estimation (including both the regression function and the variance components) is necessary for the EBLUP interval, these methods can never attain exactly the nominal level of coverage in practice, as illustrated here. 

When it comes to the efficiency of interval estimation, the half-width of the direct interval does not depend on $w$ here, while the half-width of either the EBLUP interval or the split-CSI decreases as the prediction power of $x_i$ increases, i.e. as the discrepancies $|\theta_i - \mu_i|$ decrease. In the current setup, the half-width is about $50\%$ larger by the split-CSI than by the EBLUP interval. However, given the erratic coverage of the EBLUP interval, such gains of efficiency by the model-based interval can be misleading in reality. The half-width of the split-CSI is smaller than that of the direct interval if $w <0.7$ but larger if $w \geq 0.7$. This is not surprising since, as $w \rightarrow 1$, the CSI would need to accommodate an increasing discrepancy $|\theta_i - \mu_i|$, while the direct interval is always centred on $y_i$ regardless $w$. 

We notice that it is possible to consider other plausible efficiency measures than the half-width in simulation studies.  For instance, one may consider the root of average MSE of the interval bounds, given as
\[
\tau_i = \sqrt{\tfrac{1}{2} \{ \mathbb{E}[(\text{UB}_i - \theta_i)^2] + \mathbb{E}[(\text{LB}_i - \theta_i)^2] \}} 
\]
where $\text{LB}_i$ and $\text{UB}_i$ refer to the lower and upper bounds of the interval, respectively. Since all the intervals here are given as $\tfrac{1}{2} (\text{LB}_i + \text{UB}_i) \pm \text{Half-width}_i$, $\tau_i$ would capture both the interval half-width and the variance of the interval centre. The details are omitted here, which show the same pattern as Table \ref{tab:cvr_hw}, except they favour more the split-CSI and penalise the direct interval harder, due to the variances of the respective interval centres, which is $y_i$ for the direct interval,  $0.5(\hat{\mu}_i + y_i)$ for the split-CSI, and $\hat{\mu}_i + \hat{v}_i$ for the EBLUP interval.

\subsection{Sensitivity analysis}

Here we explore several relevant issues by means of sensitivity analysis. 

\paragraph{Misspecifying $\mu(x)$}
Robustness against misspecified regression function $\mu(x)$ is a key motivation for conformal methods. Although $\mu(x_i) = x_i \beta$ looks reasonable for these data (Figure \ref{fig:xy}), we can explore its potential misspecification effects by comparisons to the coverage of the confidence intervals using the null model $\mu_i \equiv \mu$. 

\begin{table}[ht] 
\centering
\caption{Marginal confidence coverage ($\%$) by synthetic population ($w$), using the null model $\mu_i \equiv \mu$. Sampling variance by unbiased $\hat{\psi}_i$. Nominal  level $95.1\%$. Average interval half-width ($\%$) in parentheses. Based on $L=1000$ simulations.} 
\begin{tabular}{c|c|c} \toprule
$w$ & Split-CSI, $m = 2 n_1$ & EBLUP, $\hat{\psi}_i$ \\ \hline
0.1 & 95.1 (8.2) & 95.4 (3.5) \\
0.3 & 95.1 (8.4) & 95.4 (3.5) \\ 
0.5 & 95.1 (8.8) & 95.4 (3.5) \\ 
0.7 & 95.1 (9.1) & 95.0 (3.5) \\ 
0.9 & 95.1 (9.5) & 95.4 (3.5) \\ \bottomrule
\end{tabular} \label{tab:cvr_null}
\end{table}

Table \ref{tab:cvr_null} shows that the marginal confidence coverage of the null-model EBLUP interval barely varies with $w$, and is closer to the nominal level than in Table \ref{tab:cvr_hw}. This is because using the null model removes the misspecification of $\mu(x_i)$, where the random effects in the FH model \eqref{FHmod} can account for the differences from one $\theta_i$ to another. 

It follows from the EBLUP results in Table \ref{tab:cvr_null} and Table \ref{tab:cvr_hw} that, (a) the linear function $\mu(x_i) = x_i \beta$ must be somewhat misspecified for the synthetic populations, as well as the target population in 2021, although one would not reject the linear function based on a significance test, and (b) the misspecifcation does cause erratic miscoverage by the EBLUP interval, now that near-nominal levels of coverage could have been achieved by the null model and the associated MSE estimation. 

In contrast, the split-CSI achieves the nominal level of marginal confidence coverage \eqref{cvr_conf} regardless the choice of $\mu(x)$, as can be expected from the results developed in this paper,  which aligns with the intention of conformal methods. A good regression function is nevertheless helpful for improving the efficiency of interval estimation, as one can see by comparing the half-widths of split-CSI in Table \ref{tab:cvr_null} and Table \ref{tab:cvr_hw}.

\paragraph{Choice of split-sampling design} The default split-sampling design uses half-sample training. We can explore other choices for sensitivity analysis. 

\begin{table}[ht] 
\centering
\caption{Coverage and average half-width of split-CSI by $n_1/m$, all numbers in $\%$, for synthetic population $w=0.3, 0.7$. Nominal level $95.1\%$. Based on $L=1000$ simulations.} 
\begin{tabular}{c|cccccc} \toprule
& \multicolumn{6}{c}{$w=0.3$} \\ 
$n_1/n$ & 15 & 30 & 40 & 60 & 75 & 85 \\ \hline 
Coverage & 95.1 & 95.1 & 95.1 & 95.1 & 95.0 & 95.0 \\ 
Half-width & {2.6} & {2.6} & {2.6} & {2.6} & {2.7} & {3.1} \\ \bottomrule
& \multicolumn{6}{c}{$w=0.7$} \\ 
$n_1/n$ & 15 & 30 & 40 & 60 & 75 & 85 \\ \hline 
Coverage & 95.1 & 95.1 & 95.1 & 95.1 & 95.1 & 95.1 \\ 
Half-width & {3.8} & {3.8} & {3.8} & {3.9} & {4.0} & {4.4} \\ \bottomrule
\end{tabular} \label{tab:n1}
\end{table}

The results for the synthetic population with $w=0.3$ or 0.7 are given in Table \ref{tab:n1}, where the split-CSI uses $\mu_i = x_i \beta$. It is clear that the different ratios of sample-split have only a small effect on the interval width, which is fairly stable as $n_1/m$ increases from $15\%$ to $50\%$ and starts to increase with $n_1/m$ from then on. The default choice of half-sample training seems reasonable, as far as the average interval width is concerned.

\paragraph{Departure from normal distribution} We have simulated $y_i \sim \text{N}(Y_i^*, \psi_i)$ above since we cannot replicate the actual sampling design here. To explore the sensitivity of this choice, we can instead simulate log-normal area total estimate $t_i$ to yield $y_i = t_i/N_i$, given
\[
t_i \sim \text{logN}(\tau_i, \sigma_i^2) \quad\text{with}\quad \tau_i = \log (N_i Y_i^*) - \tfrac{1}{2} \sigma_i^2 
\quad\text{and}\quad \sigma_i^2 = \log \big( 1 + \psi_i/(Y_i^*)^2 \big)
\] 
where $N_i$ is the area population size, $\mathbb{E}(t_i) = N_i Y_i^*$ and $\mathbb{V}(t_i) = N_i^2 \psi_i$.  

\begin{table}[ht] 
\centering
\caption{Marginal confidence coverage ($\%$) by synthetic population ($w$), $\mu(x_i) = x_i \beta$. Unbiased sampling variance $\hat{\psi}_i$ or IID variance. Nominal level $95.1\%$. Average interval half-width ($\%$) in parentheses. Based on $L=1000$ simulations of log-normal area total estimate $t_i$.} 
\begin{tabular}{c|c|cc|cc} \toprule
& Split-CSI & \multicolumn{2}{c|}{Direct}& \multicolumn{2}{c}{EBLUP}\\
$w$ & $m = 2 n_1$ & $\hat{\psi}_i$ & IID & $\hat{\psi}_i$ & IID \\ \hline
0.1 & 94.9 (2.3) & 94.2 (3.5) & 63.6   (1.6) & 98.9 (1.4) & 68.9  (1.4)  \\
0.3 & 94.8 (2.6) & 94.4 (3.5) & 63.9   (1.6) & 86.2 (1.4) & 67.3  (1.4) \\ 
0.5 & 95.0 (3.1) &  94.6 (3.5) & 64.1 (1.6) & 
87.7 (2.1) & 66.1 (1.5) \\  
0.7 & 95.0 (3.8) &  94.8 (3.5) & 64.5 (1.6) & 91.1 (2.6)& 65.7 (1.5)\\ 
0.9 & 95.0 (4.6) &  94.9 (3.6) & 64.7 (1.6) & 92.4 (2.9)& 65.5 (1.6) \\ \bottomrule
\end{tabular} \label{tab:cvr_logN} \\
 \end{table}

The differences to the results in Table \ref{tab:cvr_hw} are quite small for all the intervals, because the log-normal skewness is small with these data and it does not cause a large difference to the normal distribution. For comparison, the logN-distribution skewness of $t_i/N_i$ here is much smaller than that of $\text{logN}(-0.5,1)$ exemplified in Section \ref{sec:mis_cvr_2}.

\subsection{Conclusion}

Based on the results of method evaluation, we conclude that the design-based split-CSI yields the most reliable marginal confidence coverage \eqref{cvr_conf} for the Italian population permanent census survey in 2021. The split-sampling design is agnostic to the sample distribution of $\{ (y_i, x_i) : i\in s\}$, and the conditional confidence inference property of the split-CSI is attractive theoretically as well as practically. In comparison, the apparent efficiency of the EBLUP interval is likely to  mislead with respect to finite population inference, whereas the direct interval may be less efficient and its coverage can never achieve exactly the nominal level due to the need of variance estimation.    

It is worthwhile to add that the areas considered in this application were actually selected according to a complex stratified rotating sample design, where $95$ of the areas are selected with probabilities less than $1$ and the rest are self-representing areas selected with probability $1$. In other words, the assumption of IID $(y_i, x_i)$ or exchangeable $\{ (y_i, x_i) : i\in s\}$ for model-based conformal inference likely does not hold. To be sure, we can confirm that the coverage of the split-CSI remains valid, when we limit the reference set to the $95$ areas or the $69$ areas, separately, the details of which are omitted here.  

Moreover, the split-CSI is easier to implement than the direct or EBLUP interval, because there is no need at all to estimate the sampling variances $\psi_i$. Although the FH-model \eqref{FHmod} assumes these to be known, the estimation of $\psi_i$ is a perennial issue in practice. To improve the efficiency of EBLUP, it is common to smooth the direct variance estimates, which will cause bias of the resulting estimator of $\psi_i$ and miscoverage of the EBLUP interval. Even for the direct interval, unbiased variance estimation may be unrealistic where the area sample sizes are small and the direct estimator $y_i$ is not simply the estimator of Horvitz and Thompson (1952). Conformal confidence inference by the split-CSI avoids the matter completely, allowing one to focus on the regression function $\mu(x)$, in order to achieve the nominal coverage level and to reduce the interval width. For the Italian population permanent census survey, one may improve the regression function we have used by bringing in additional features.

\section{Final remarks} \label{sec:final}

Barber et al. (2021) show that the validity of conformal prediction inference holds marginally, and no conformal method can guarantee the prediction coverage of any particular out-of-sample outcome. The marginal interpretation applies as well to conformal confidence inference \eqref{cvr_conf}, which is our target in this paper.

We have developed a general approach to in-sample conformal confidence inference, which fills a gap between classical regression and conformal inference. The conformal shrinkage interval (CSI) by any shrinkage constant, $\phi \in (0,1)$, can achieve the nominal level of prediction coverage for the in-sample outcomes, while the shrinkage constant $\phi =0.5$ in particular can be used for confidence inference generally. The split-sampling design and the resulting split-CSI are proposed for confidence inference conditional the actual sample, which is agnostic to the underlying sample-data distribution that is unknown and may not satisfy the IID or exchangeability assumption. The conditions for the exact confidence coverage are identified, and the miscoverage probability is given otherwise.  

The proposed conformal confidence inference is applied to the Italian permanent census survey 2021. Method evaluation shows that the split-CSI yields the most reliable confidence coverage --and it is easier to implement-- than the alternative of direct or model-based interval. Conformal confidence inference provides thus a robust primary choice for SAE; the challenge for the practitioners will be to potentially improve the efficiency of the split-CSI without sacrificing its coverage level. This belongs to an active area of research, where one attempts to improve feature-specific properties of the conformal intervals without sacrificing the marginally valid coverage. In the context of SAE, a relevant question is how to differentiate the areas according to the variances of their direct estimators, even when one only have the variance estimates but not the true variances.

\appendix
\section{Proofs} \label{sec:proof}

Proof of Lemma \ref{lemma:IID:pred}.
\begin{proof}
Given IID $\big( y_i, \dot{\theta}_i(\phi), b_i(\phi), |b_i|(\phi) \big)$ for $i\in \{1, ..., n\}\cup \{ j\}$, and feasible $k$, we have
\begin{align*}
\Pr\big( y_j \in \mathcal{A}_j(k; \phi)\big) & = \Pr\{ b_{(k)}(\phi) \leq y_j - \dot{\theta}_j(\phi) \leq b_{(n-k+1)}(\phi) \} \\
& = \Pr\{ b_{(k)}(\phi) \leq  b_j(\phi) \leq b_{(n-k+1)}(\phi) \} = 1 - \tfrac{2k}{n+1} ~.
\end{align*}
and
\begin{align*}
\Pr\big( y_j \in \mathcal{C}_j(k; \phi)\big) & = \Pr\{ -|b|_{(n-k+1)}(\phi) \leq y_j - \dot{\theta}_j(\phi) \leq |b|_{(n-k+1)}(\phi) \} \\
& = \Pr\{ |b_j|(\phi) \leq |b|_{(n-k+1)}(\phi) \} = 1 - \tfrac{k}{n+1} ~. \qedhere
\end{align*}
\end{proof}

\noindent
Proof of Lemma  \ref{lemma:IID:ACSI}.
\begin{proof} Let $b_i(\phi) = (1-\phi) u_i$. For any feasible $k$, we have
\begin{align*}
\mathcal{A}_j(k; \phi) & = \big[ \phi y_j + (1- \phi) \mu_j + (1-\phi) u_{(k)},~ \phi y_j + (1- \phi) \mu_j + (1-\phi) u_{(n-k+1)} \big] \\
& = \big[ \phi y_j + (1- \phi) (\mu_j + u_{(k)}),~ \phi y_j + (1- \phi) (\mu_j + u_{(n-k+1)}) \big] ~.
\end{align*}
Given $\phi = 0.5$, prediction coverage refers to the probability of
\begin{gather*}
y_j \in \mathcal{A}_j(0.5; \mu) \quad\Leftrightarrow\quad u_{(k)} \leq u_j \leq u_{(n-k+1)} \\
\quad\Leftrightarrow\quad (\mu_j -\theta_j) + u_{(k)} \leq e_j \leq (\mu_j -\theta_j) + u_{(n-k+1)}
\end{gather*}
where $e_j = y_j - \theta_j$, and confidence coverage refers to the probability of
\begin{gather*}
\theta_j \in \mathcal{A}_j(0.5; \mu) \quad\Leftrightarrow\quad 
y_j + \mu_j + u_{(k)} \leq 2 \theta_j \leq  y_j + \mu_j + u_{(n-k+1)} \\
\quad\Leftrightarrow\quad u_{(k)} \leq 2(\theta_j -\mu_j) - u_j \leq u_{(n-k+1)} \\
\quad\Leftrightarrow\quad (\mu_j -\theta_j) + u_{(k)} \leq -e_j \leq (\mu_j -\theta_j) + u_{(n-k+1)} ~.
\end{gather*}
This yields $\epsilon_{\mathcal{A}}$ by \eqref{mis:ACSI}, which becomes zero given symmetrically distributed $e_i$.
\end{proof}

\noindent
Proof of Lemma  \ref{lemma:IID:CSI}.
\begin{proof}
For any feasible $k$, we have $|b|_{(n-k+1)} = \tfrac{1}{2} |u|_{(n-k+1)}$ given $\phi =0.5$, such that
\begin{gather*}
\theta_j \in \mathcal{C}_j(k;0.5) \quad\Leftrightarrow\quad
\mu_j - |u|_{(n-k+1)} + e_j \leq \theta_j \leq \mu_j + |u|_{(n-k+1)} + e_j \\
\Leftrightarrow\quad -|u|_{(n-k+1)} \leq u_j - 2 (\theta_j - \mu_j) \leq |u|_{(n-k+1)}
\end{gather*}
where $e_j = y_j - \theta_j$, and
\begin{gather*}
y_j \in  \mathcal{C}_j(k;0.5) \quad\Leftrightarrow\quad
\mu_j - |u|_{(n-k+1)} \leq \theta_j + e_j \leq \mu_j + |u|_{(n-k+1)} \\
\Leftrightarrow\quad -|u|_{(n-k+1)} \leq u_j \leq |u|_{(n-k+1)} ~.
\end{gather*}
This yields $\epsilon_{\mathcal{C}}$ by \eqref{mis:CSI}, and the result (i). Next, we have $y_j -\theta_j \overset{D}{\simeq} \theta_j - y_j$ or, equivalently, $y_j \overset{D}{\simeq} 2\theta_j - y_j$ if $y_j$ is symmetrically distributed around $\theta_j$, in which case
\begin{gather*}
\theta_j \in \mathcal{C}_j(k;0.5) \quad\Leftrightarrow\quad -|u|_{(n-k+1)} \leq y_j - 2 \theta_j + \mu_j \leq |u|_{(n-k+1)} \\
\Leftrightarrow\quad \mu_j -|u|_{(n-k+1)} \leq 2 \theta_j - y_j \leq \mu_j + |u|_{(n-k+1)} \\
\Leftrightarrow\quad \mu_j -|u|_{(n-k+1)} \leq y_j \leq \mu_j + |u|_{(n-k+1)} \\
\Leftrightarrow\quad -|u|_{(n-k+1)} \leq u_j \leq |u|_{(n-k+1)} \quad\Leftrightarrow\quad y_j \in  \mathcal{C}_j(k;0.5) ~. \qedhere
\end{gather*}
\end{proof}

\noindent
Proof of Lemma \ref{lemma:SRS:pred}.
\begin{proof}
The split-sampling design implies that the joint distribution of $\{ |b_i| :  i\in s\setminus s_1 \}$ is exchangeable by design, where $b_i = 0.5 u_i$ using the shrinkage constant $\phi =0.5$ and $u_i = y_i - \mu(x_i, s_1)$. Given feasible CSI $\mathcal{C}_j(k)$, we have, for a randomly selected unit $j$ from the $n_2 +1$ units in $s\setminus s_1$, and the corresponding calibration set $s_2 = s_1 \setminus \{ j\}$,  
\begin{align*}
\Pr\big\{ y_j \in \mathcal{C}_j(k)\big\} & = \Pr\{ |b_j| \leq |b|_{(n_2-k+1)} \} \\
& = \frac{1}{n_2 +1} \sum_{i=1}^{n_2 +1} \Pr\{ |b_i| \leq |b|_{(n_2-k+1)} \}  = 1 - \tfrac{k}{n_2 +1}
\end{align*}
where the first quality holds by definition, the second equality follows because $j$ is randomly selected from $s\setminus s_1$, and the third equality follows because each term in the sum is identical due to exchangeability. Over repeated split-sampling, we obtain
\[
\frac{1}{m} \sum_{j=1}^m \Pr\big\{ y_j \in \mathcal{C}_j(k)\big\} = 1 - \tfrac{k}{n_2 +1} ~. \qedhere
\]
\end{proof}

\noindent
Proof of Lemma \ref{lemma:SRS:CSI}.
\begin{proof} By the same argument in Lemma \ref{lemma:IID:CSI}, we obtain 
\begin{gather*}
\theta_j \in \mathcal{C}_j(k) \quad\Leftrightarrow\quad -|u|_{(n-k+1)} \leq u_j - 2 (\theta_j - \mu_j) \leq |u|_{(n-k+1)} \\
y_j \in  \mathcal{C}_j(k) \quad\Leftrightarrow\quad -|u|_{(n-k+1)} \leq u_j \leq |u|_{(n-k+1)} 
\end{gather*}
such that the result \eqref{mis:SRS:CSI} follows from applying Lemma \ref{lemma:SRS:pred}. 
\end{proof}

\noindent
Proof of Corollary \ref{coro:null}.
\begin{proof} For confidence inference of the marginal  $\theta = \mathbb{E}(y_i)$ with respect to $\mathcal{P}$, we notice that the training set mean $\bar{y}(s_1) = \tfrac{1}{n_1} \sum_{i\in s_1} y_i$ is an unbiased estimator of $\theta$ with respect to SRSWOR of $s_1$ from $s$. Given finite values of $y_i$ in the sample $s$, the SRSWOR variance of $\bar{y}(s_1)$ tends to $0$, asymptotically as $n_1 \rightarrow \infty$, i.e. $\bar{y}(s_1)$ is a consistent estimator of $\theta$. The result follows from the continuous mapping theorem.
\end{proof}


\begin{thebibliography}{999} 

\bibitem{barber2021} Barber, R.F., Candès, E.J., Ramdas, A. and Tibshirani, R.J. (2021). The limits of distribution-free conditional predictive inference. \textit{Information and Inference: A Journal of the IMA}, 10:455-482.

\bibitem{battese1988} Battese, G. E., Harter, R. M., and Fuller, W. A. (1988). An error component model for prediction of county crop areas using survey and satellite data. \textit{Journal of the American Statistical Association}, 83:28-36.

\bibitem{berger1998} Berger, Y. G. (1998). Rate of convergence for asymptotic variance of the Horvitz-Thompson estimator.
\textit{Journal of Statistical Planning and Inference}, 67:209-226.

\bibitem{bersson2024} Bersson, E, and Hoff, P.D. (2024). Optimal Conformal Prediction for Small Areas. \textit{Journal of Survey Statistics and Methodology}, 12:1464-1488,

\bibitem{fay1979} Fay, R.E. and Herriot, R.A. (1979). Estimates of income for small places: An application of James-Stein procedures to Census data. \textit{Journal of the American Statistical Association}, 74:269-277.

\bibitem{horvitz1952survey} Horvitz, D. G. and Thompson, D. J. (1952). A generalization of sampling without replacement from a finite universe. \emph{Journal of the American Statistical Association}, 47:663-685.
 
\bibitem{lei2018} Lei, J., G'Sell, M., Rinaldo, A., Tibshirani, R., and Wasserman, L. (2018). Distribution-Free Predictive Inference for Regression. \textit{Journal of the American Statistical Association}, 113:1094-1111. 

\bibitem{rao2015} Rao, J. N. and Molina, I. (2015). \emph{Small area estimation}. John Wiley \& Sons.
 
\bibitem{shafer2008} Shafer G. and Vovk, V. (2008). A Tutorial on Conformal Prediction. \textit{Journal of Machine Learning Research}, 9:371-421. 
 
\bibitem{tobshirani2019} Tibshirani, R.J., Barber, R.F., Candes, E. and Ramdas, A. (2019). Conformal prediction under covariate shift. \textit{Advances in Neural Information Processing Systems}, 32.

\bibitem{vovk2005-2} Vovk, V., Gammerman, A., and Shafer, G. (2022). \textit{Algorithmic learning in a random world, 2nd Edition}. Springer.

\bibitem{vovk2005} Vovk, V., Gammerman, A., and Shafer, G. (2005). \textit{Algorithmic Learning in a Random World}. Springer.


 
\end{thebibliography}
\end{document}